\title{\LARGE \bf {Propagation Stability Concepts for Network Synchronization Processes}} 
\author{Sandip Roy, Subir Sarker, and Mengran Xue% <-this % stops a space
%\thanks{This work was partially supported by NSF Grants 1545104 and 1635184.}% <-this % stops a space
\thanks{The first two authors are with School of Electrical Engineering and Computer Science,
        Washington State University, Pullman, WA 99164, USA. The third author
        is with Raytheon BBN Technologies.
        {\tt \small Correspondence: sandip@wsu.edu}}%
}
\newtheorem{definition}{Definition}
\newtheorem{theorem}{Theorem}
\newtheorem{corollary}{Corollary}
\begin{document}
\maketitle
%\thispagestyle{empty}
%\pagestyle{empty}

%%%%%%%%%%%%%%%%%%%%%%%%%%%%%%%%%%%%%%%%%%%%%%%%%%%%%%%%%%%%%%%%%%%%%%%%%%%%%%%%
\begin{abstract}
A notion of disturbance propagation stability is defined for dynamical network processes, in terms of decrescence of an input-output energy metric along cutsets away from the disturbance source.  A characterization of the disturbance propagation notion is developed for a canonical model for synchronization of linearly-coupled homogeneous subsystems.  Specifically, propagation stability is equivalenced with the frequency response of a certain local closed-loop model, which is defined from the subsystem model and local network connections, being sub-unity gain.  For the case where the subsystem is single-input single-output (SISO), a further simplification in terms of the subsystem's open loop Nyquist plot is obtained.  An extension of the disturbance propagation stability concept toward imperviousness of subnetworks to disturbances is briefly developed, and an example focused on networks with planar subsystems is considered. 
\end{abstract}

\section{Introduction}

The synchronization of coupled systems is common in nature and in the engineered world.  For this reason, there has been a substantial cross-disciplinary research effort to define and characterize synchronization phenomena in networks of coupled systems \cite{pecora,chua,murray}.  These studies define synchronization in terms of the internal asymptotic stability of a manifold, on which each coupled system has an identical state or output.  However, synchronization in a practical sense requires not only that the coupled systems come to a common state/output, but also that this equilibrium is impervious to external disturbances.  Based on this recognition, a body of recent work has considered the disturbance responses of network synchronization processes \cite{gchen,saberi,tegling}.  Broadly, these efforts model disturbances as impinging on all or a subset of subsystems within a network, and evaluate their potential impacts on network-wide synchrony according to a performance metric (typically, a $H_2$ or $H_\infty$ gain).  The studies in this direction have largely focused on evaluating the metrics from a graph-theoretic perspective, and designing controllers to bound the performance metrics within a threshold.

Engineers working with large-scale built networks often do not think about coordination in terms of either internal stability characteristics or global (network-wide) disturbance response metrics.  Rather, they are concerned with the extent of propagation of local disturbances, whether arising from exogenous inputs or state deviations.  For instance, bulk power grid operators often distinguish well-damped networks where oscillatory disturbance responses remain localized from poorly-damped ones where such disturbances have network-wide impact \cite{gridforcedoscillation}.  Similar assessments of network performance in terms of disturbance propagation are of interest in disciplines ranging from air traffic management to infectious-disease epidemiology and cyber-security \cite{propstability}.  

The spatial propagation of disturbances among interconnected systems in cascade or line-topology configurations has been extensively researched under the heading of string stability \cite{string1,string2}.  The string stability concept has also been extended to general directional networks in the {\em mesh stability} literature \cite{mesh}. The recent study \cite{studli} has recognized the need for disturbance-propagation stability notions for general (bi-directionally connected) dynamical networks, and therefore has proposed a definition for network stability in terms of boundedness of input-to-state or input-to-output gains which parallels the basic string stability definition. It has also introduced an alternate network stability definition for tree-like graphs which captures monotonic decresence of the propagative response away from the disturbance source; this definition is a generalization of the {\em strict string stability} concept \cite{string2}.  Other recent studies have also examined stability notions for linear and nonlinear network processes, which are concerned with disturbance propagation (e.g., \cite{mirabilio}). Additionally, a number of recent studies have characterized local-input-to-output properties of dynamical networks (e.g., gains, zeros), without however explicitly considering propagation \cite{porco,koorehdavoudi}. However, definitions for propagation stability are not yet mature, and the formal analysis of propagation is incomplete even for networks of coupled linear systems. 

%Beyond these initial efforts, however, disturbance propagation metrics and propagation stability concepts have not yet been fully defined for general networks, and analyses have not been completed for networks made up of coupled linear systems.  

The purpose of this article is to define and characterize a notion of disturbance propagation stability in the context of a canonical network model for synchronization of coupled homogeneous linear subsystems.  The main contributions of the study are two-fold:

1) A general definition is introduced for (strict) propagation stability, based on decrescence of response norms across cutsets in the network's digraph, or equivalently along paths away from the disturbance source.

2) Propagation stability is characterized in terms of the closed-loop frequency responses of the subsystem model with a proportional feedback controller applied.  For the case of single-input single-output subsystems, these conditions are further simplified to conditions on the subsystem's open-loop frequency response and/or transfer function.

The analysis of propagation stability depends on a new, local characterization of synchronization models.  This characterization is markedly different from the spectral decomposition that has been exhaustively used to understand both internal and global external stability \cite{pecora,chua}, and provides a further assessment/categorization of synchronization processes.

The remainder of the article is organized as follows.  The network synchronization model is described in Section II, and disturbance propagation stability notions are defined in Section III.  The main characterizations of propagation stability are presented in Section IV.  Finally, in Section V, an example is given which focuses on the impact of damping on propagation stability when the subsytems are planar devices.

\section{Model}

A network with $N$ identical, interconnected devices or nodes or {\em subsystems}, labeled $1,\hdots, N$, is considered.  Each subsystem $i \in 1, \hdots, N$ has a state ${\bf x}_i \in R^n$ and output ${\bf y}_i \in R^m$ which are governed by the following linear or linearized state-space equations:
\begin{eqnarray}
& & \dot{\bf x}_i = A {\bf x}_i +B \left( \alpha \sum_{j \neq i} g_{ij}({\bf y}_j - {\bf y}_i) +\gamma_i {\bf w}_i \right) \label{eq:main}\\
& & {\bf y}_i=C {\bf x}_i .  \nonumber
\end{eqnarray}
Here, $A$, $B$, and $C$ are a subsystem's state, input, and output matrices, respectively; the scalars $g_{ij} \ge 0$ are coupling weights; the vector ${\bf w}_i \in R^m$ represents an external disturbance input at subsystem $i$; and $\alpha$ is a global coupling-strength parameter which allows tuning of the network connectivity (see \cite{pecora}).  Our focus here is on an external disturbance impinging on a single source node $s \in 1,\hdots, N$, which is modeled by setting $\gamma_s=1$ and $\gamma_i=0$ 
for $i \neq s$.  

The model (\ref{eq:main}) is a standard representation for the small-signal dynamics of synchronizing coupled oscillators \cite{pecora,chua}, with two distinctions.  First, a disturbance is applied at a single subsystem, to allow for analysis of propagative impacts.  Second, the subsystems are modeled as being interconnected through commensurately-dimensioned inputs and outputs, rather than through an explicit inner-coupling term or alternately through a designable protocol.  This format is used to stress that the network is made up of input-output devices with fixed connections, but the formulation encompasses the scenarios with an inner coupling or a designable protocol.  

Analyses of (\ref{eq:main}) often are phrased in terms of a graph that represents the network interconnections.  For our development, a weighted digraph $\Gamma$ is defined with $N$ vertices corresponding to the $N$ subsystems.  A directed edge is drawn from vertex $j$ to vertex $i$ if $g_{ij}>0$, reflecting a direct influence of the output of subsystem $j$ on the state evolution of subsystem $i$.  The edge is assigned a weight of $g_{ij}$.
We use the notation ${\cal V}$ for the set of vertices, and ${\cal E}$ for the set of edges.
Additionally, it is convenient to define an (asymmetric) Laplacian matrix $L=[l_{ij}] \in R^{N \times N}$. Each off-diagonal entry $l_{ij}$ is given by $-g_{ij}$, while the diagonal entries are selected so that each row sums to $0$ (i.e. $l_{ii}=\sum_{j \neq i} g_{ij}$).  

For the model (\ref{eq:main}), the synchronization manifold where the states ${\bf x}_1,\hdots, {\bf x}_n$ are identical is known to be asympotically stable under broad conditions on the network graph, the subsystem model, and the coupling-strength parameter.  
More precisely, stability can be related to Hurwitz stability of the $N$ complex matrices $A+\lambda_i BC$, where $0=\lambda_1,\lambda_2, \hdots, \lambda_N$ are the eigenvalues of the Laplacian matrix $L$.  From this analysis, stability can be distilled to a simple test on the Laplacian matrix's spectrum via the master stability function construct, see e.g. \cite{pecora} for details.  A number of other characteristics of (\ref{eq:main}), including global disturbance stability and controllability via external stimulation, can also be related to the matrices $A+\lambda_iBC$ \cite{gchen,controllability}.    

 From here on we refer to the model (\ref{eq:main})  as the network synchronization model. The model is approximative of a number of synchronization phenomena, including the swing dynamics of the bulk power grid, multi-vehicle formation flight, and the nonlinear dynamics of electrical oscillator networks.

\section{Propagation Stability Definition}

A notion of propagation stability is defined based on the spatial patterns of output energies (squared two norms) at network subsystems over a time interval $[0,T]$, when an exogenous disturbance input $w_s(t)$ is applied at a single node.  In our development, the disturbance is assumed to satisfy the Dirichlet conditions (absolute integrability over any period, finite number of discontinuities and minima/maxima, bounded over any interval), but otherwise may be arbitrary. In defining stability, the squared two-norm metric $E_i(T)=\int_{t=0}^T {\bf y}_i^T (t) {\bf y}_i(t) \, dt$ is considered for each network subsystem $i$. Conceptually, the network can be viewed as propagation stable, if these energies are attenuated away from the disturbance source with respect to the network graph.  However, since the network's graph in general has a spatially inhomogeneous structure, defining attenuation requires some care.

%For the network synchronization model, propagative responses due to either: 1) local exogenous inputs or 2) localized perturbations of the state away from the synchronization manifold are of interest.  This study is focused on characterizing propagative responses due to an exogenous input.  Precisely, the zero-state response of the network model to a disturbance $w_s(t)$ at a single source location (Equation \ref{eq:main}) is tracked after a time $t=0$.   
%The disturbance is assumed to satisfy the Dirichlet conditions (absolute integrability over any period, finite number of discontinuities and minima/maxima, bounded over any interval), but otherwise may be arbitrary.  We are concerned with the spatial pattern in the response energies (squared two norms) of the outputs at network subsystems over an interval $[0,T]$.  Formally, the metric $E_i(T)=\int_{t=0}^T {\bf y}_i^T (t) {\bf y}_i(t) \, dt$ is considered for each network subsystem $i$. 

\begin{figure}[!htb] 
\centering
\includegraphics[width=8cm]{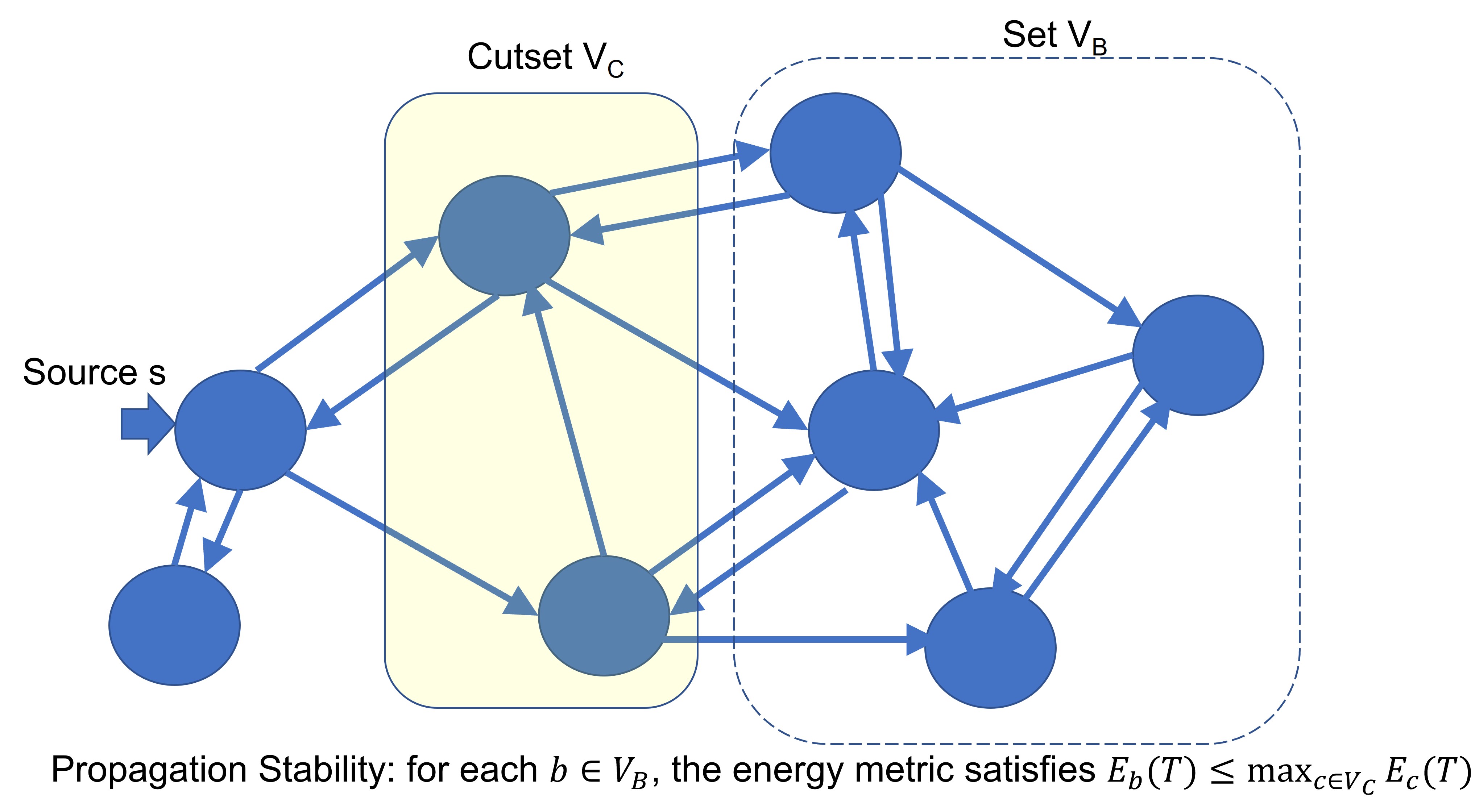}
\caption{Illustration of the propagation stability concept.} \label{fig:1}
\end{figure}

One natural way to assess disturbance propagation is to consider the energy metric $E_i(T)$ for vertex-cutsets in the network's graph (i.e., sets of vertices whose removal partition the graph). If the metric value for at least one cutset vertex is larger than the metric values for vertices that   are separated from the source by the cutset, then the response can be viewed as being attenuated away from the source (see Figure \ref{fig:1}).  To formalize this notion, let us consider a set of vertices ${\cal V}_C \in {\cal V}$.  We refer
to ${\cal V}_C$ as a separating cutset for the source $s$, if the remaining vertices 
${\cal V} \setminus {\cal V}_C$ can be partitioned into two subsets ${\cal V}_1$ and ${\cal V}_B$ such that: 1) there are no edges from vertices in ${\cal V}_1$ to vertices in ${\cal V}_B$ and 2) ${\cal V}_A={\cal V}_1 \cup {\cal V}_C$ 
contains the source vertex $s$.  Using this notation, the following definition for propagation stability is proposed:

\begin{definition}
The network synchronization model is {\bf propagation stable} if the following two conditions hold. 
\begin{enumerate}
    \item The synchronization manifold is asymptotically stable in the sense of Lyapunov.
    \item For every source location $s$, disturbance signal ${\bf w}_s(t)$, separating cutset ${\cal V}_C$ for $s$, time horizon $T>0$, and vertex $b \in {\cal V}_B$, the following majorization holds: $E_b(T) \le \max_{c \in {\cal V}_C} E_c(T)$.
\end{enumerate}
    
\end{definition}

The definition captures that the output signal energy for a subsystem associated with a graph cutset upper bounds the signal energy for all subsystems beyond the cutset, and hence the output signal energy is attenuated at cutsets away from the source.  An immediate consequence of the definition is that the maximum response energy among subsystems at a distance $r$ from the source location on the network graph, say $E(r)$, is a non-increasing function of $r$.

With some further thought, one also sees that propagation stability can be equivalently expressed in terms of paths in the network graph.  In particular, the network model is propagation stable if and only if there is at least one path in the network graph from the disturbance source to each other vertex such that the response energy is non-increasing along the path.  From this equivalent form, it is evident that propagation stability is a generalization of the classical strict string-stability definition and a recent definition for (strict) network stability in tree-like networks \cite{string2,studli}, in that decrescence or attenuation along paths away from the disturbance source is enforced.  We stress that our definition only requires attenuation along one path for each pair of vertices, so as to encompass the varying attenuation patterns that may arise in inhomogeneously-structured networks.

%The cutset-based attenuation notion used in the definition is a natural generalization to the path-based criteria for input-output stability developed for tree-like networks in \cite{???}.

%!!!Show that the notion is distinct from both internal stability and input-output stability notions for synchronization/consensus models!!!

In some circumstances, the propagation stability definition may be too rigid to capture attenuative dynamics in a network.  In particular, it is possible that small regions in a network may be susceptible to disturbance amplification, but a substantial portion of the network nevertheless attenuates disturbances.  This partial notion of propagation attenuation is captured in the following definition:

\begin{definition}
Consider a set ${\cal V}_{D}$ of vertices in the network graph (respectively, subsystems in the network model).  Assume that the induced subgraph of $\Gamma$ defined by ${\cal V}_{D}$ is strongly connected.  The subnetwork defined by ${\cal V}_{D}$ is said to be {\bf propagation impervious} if the following two conditions hold. 
\begin{enumerate}
    \item The synchronization manifold for the network synchronization model is asymptotically stable in the sense of Lyapunov.
    \item For every source location $s$, disturbance signal ${\bf w}_s(t)$, separating cutset ${\cal V}_C$ for $s$ contained within ${\cal V}_D$, time horizon $T>0$, and vertex $b \in {\cal V}_B \cap {\cal V}_D$, the following majorization holds: $E_b \le \max_{c \in {\cal V}_C} E_c$.
\end{enumerate}
    
\end{definition}

The definition asserts that disturbances that enter the propagation impervious subnetwork (whether from an outside or an inside source) then exhibit a spatial attenuation within that subnetwork.  The definition for propagation imperviousness also aligns with concepts in the strict string stability literature \cite{studli}, which allow for amplification in a radius around the disturbance source provided that attenuation is guaranteed elsewhere.

\section{Propagation Stability Analysis}

Propagation stability is concerned with the ability of network subsystems to attenuate impinging disturbances, such that a disturbance response falls off spatially in the network regardless of the source.  Thus, one might expect propagation stability to be related to {\em local} characteristics of the network synchronization model, specifically the structure of each subsystem and its interconnections with its neighbors.  In the following development, we present several conditions for propagation stability and imperviousness, which are phrased in terms of local (subsystem-level) frequency response characteristics.  Formally, these conditions are expressed in terms of the following Laplace-domain {\em local feedback transfer matrices} $H_i(s)$, defined for each subsystem $i \in 1,\hdots, N$: 
\begin{small}
\begin{equation}
    H_i(s)=(\alpha \sum_{j \in {\cal N}(i)} g_{ij}) C(sI-A + \alpha \sum_{j \in {\cal N}(i)} g_{ij}BC)^{-1} B.
\end{equation}
\end{small}
\noindent Here, the notation ${\cal N}_i$ refers to the set of {\em incoming neighbors} of vertex $i$ in the graph, i.e. the set containing vertices $j$ such that $g_{ij}>0$.

The first main result of our development is a sufficient condition for propagation stability:

\begin{theorem}
The network synchronization is propagation stable if: 
1) the synchronization manifold is asympotically stable in the sense of Lyapunov and 2) 
$\sup_{\omega} \sigma_{max}(H_i(j\omega)) \le 1 $
for all $i=1,\hdots,N$. 
%Further, if the network graph has any vertex that has an incoming edge from only one other vertex, the condition is necessary and sufficient.
\end{theorem}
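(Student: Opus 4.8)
The plan is to reduce the theorem to a local input--output inequality at each non-source subsystem, and then to a purely graph-theoretic majorization. First I would rewrite the node dynamics (\ref{eq:main}) by absorbing the self-coupling term into a local closed loop. Writing $d_i = \alpha\sum_{j\in{\cal N}_i}g_{ij}$, the equation becomes $\dot{\bf x}_i = (A - d_iBC){\bf x}_i + B(\alpha\sum_j g_{ij}{\bf y}_j + \gamma_i{\bf w}_i)$. Taking Laplace transforms from rest and setting $p_{ij} = \alpha g_{ij}/d_i$ (so that $p_{ij}\ge 0$ and $\sum_j p_{ij}=1$), one obtains, for every non-source subsystem $b\ne s$, the relation ${\bf Y}_b(s) = H_b(s)\sum_{j} p_{bj}{\bf Y}_j(s)$, where $H_b(s)$ is exactly the local feedback transfer matrix of the statement. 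The cutset hypothesis then enters as a structural fact: since there are no edges from ${\cal V}_1$ to ${\cal V}_B$, every incoming neighbor of a vertex $b\in{\cal V}_B$ lies in ${\cal V}_C\cup{\cal V}_B$, so ${\bf u}_b := \sum_j p_{bj}{\bf y}_j$ is a convex combination of outputs of ${\cal V}_C$- and ${\cal V}_B$-subsystems only.

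Next I would pass from the frequency-domain gain condition to the finite-horizon energy metric. Condition (2) states that each $H_b$ is a stable transfer matrix whose $L_2$-induced gain equals $\sup_\omega\sigma_{max}(H_b(j\omega))\le 1$. Because $H_b$ is causal, the output ${\bf y}_b$ on $[0,T]$ is unaffected if I replace ${\bf u}_b$ by its truncation $P_T{\bf u}_b$, giving $\|P_T{\bf y}_b\|_2 = \|P_TH_bP_T{\bf u}_b\|_2 \le \|H_b\|_\infty\,\|P_T{\bf u}_b\|_2 \le \|P_T{\bf u}_b\|_2$; this is precisely the step that converts the infinite-horizon norm bound into a bound on $E_b(T)=\|P_T{\bf y}_b\|_2^2$. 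Applying the triangle inequality to the convex combination ${\bf u}_b$ then yields the scalar inequality $\sqrt{E_b(T)}\le \sum_j p_{bj}\sqrt{E_j(T)}$ for every $b\in{\cal V}_B$. I would note that this step requires $A-d_bBC$ to be Hurwitz, so that the $\|\cdot\|_\infty$ bound genuinely controls the $L_2$ gain rather than merely the imaginary-axis magnitude; finiteness of the supremum in (2) together with internal stability is what I would use to guarantee this, and I would verify it explicitly.

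Finally, writing $e_i=\sqrt{E_i(T)}$, the problem collapses to a purely combinatorial claim: given $e_b\le\sum_{j\in{\cal V}_C\cup{\cal V}_B}p_{bj}e_j$ with convex weights for all $b\in{\cal V}_B$, show $\max_{b\in{\cal V}_B}e_b\le\max_{c\in{\cal V}_C}e_c$. I would argue by contradiction: let $m=\max_{{\cal V}_B}e_b$ and suppose $m>\max_{{\cal V}_C}e_c$, with argmax set ${\cal M}\subseteq{\cal V}_B$. For $b\in{\cal M}$ the convex inequality forces every positive-weight term to attain $m$; since each ${\cal V}_C$-neighbor has value strictly below $m$, no $b\in{\cal M}$ can have an incoming edge from ${\cal V}_C$, and all its incoming ${\cal V}_B$-neighbors must lie in ${\cal M}$. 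But $m>0$ means the corresponding output is nonzero, hence $b$ is reachable from $s$; because ${\cal V}_C$ separates $s$ from ${\cal V}_B$, any such path enters ${\cal V}_B$ through a ${\cal V}_C$-vertex and thereafter stays in ${\cal V}_B$. Tracing this path backward from $b$ keeps us inside ${\cal M}$ until we reach the ${\cal V}_B$-vertex possessing a ${\cal V}_C$-incoming edge, contradicting the previous sentence. This forces $m\le\max_{{\cal V}_C}e_c$, i.e.\ $E_b(T)\le\max_{c\in{\cal V}_C}E_c(T)$ for all $b\in{\cal V}_B$, which is exactly the required cutset majorization.

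The main obstacle I anticipate is the middle bridge: relating the infinite-horizon, frequency-domain condition (2) to the finite-horizon energies $E_i(T)$ hinges on the causality/truncation argument and, crucially, on stability of each local closed loop $A-d_bBC$, which must be extracted from the hypotheses rather than taken for granted. The graph-theoretic finish is elementary once the right reachability consequence of the cutset definition is isolated, but some care is needed to handle vertices of ${\cal V}_B$ that are unreachable from the source, which simply carry zero energy and satisfy the majorization trivially.
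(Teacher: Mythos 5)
Your proposal follows essentially the same route as the paper's proof: the same local closed-loop relation $\overline{Y}_i(s) = H_i(s)\sum_j p_{ij}\overline{Y}_j(s)$, the same truncation-plus-Parseval (equivalently, causal $L_2$-gain) step converting the bound $\sup_\omega\sigma_{max}(H_i(j\omega))\le 1$ into $E_i(T)$ being majorized by a convex combination of neighbor energies, and an equivalent graph-theoretic propagation of that local inequality back to the cutset (your argmax-set contradiction is the same mechanism as the paper's iterative set-growing argument). The one place you go beyond the paper --- insisting that Hurwitz stability of $A - d_iBC$ be established so that the imaginary-axis bound is genuinely an $L_2$-induced gain --- is a legitimate concern that the paper leaves implicit by reading condition (2) as an $H_\infty$-norm bound; just note that finiteness of the supremum on the $j\omega$-axis alone does not deliver this, so it should be treated as part of the hypothesis rather than derived from it.
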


\begin{proof}
The synchronization manifold is asymptotically stable by assumption, hence the first criterion for propagation stability is met. 

To verify that the second criterion for propagation stability is met, first consider the zero-state output ${\bf y}_i(t)$ over the interval $[0,T]$ for any subsystem $i$ other than the source $s$.  We 
notice that, if ${\bf y}_j(t)$ for $j \in {\cal N}(i)$  is known over the interval $[0,T]$, this is sufficient to compute ${\bf y}_i(t)$ over the interval.  Specifically, ${\bf y}_i(t)$ can be computed over $[0,T]$ by first solving the
following equation for $\overline{\bf y}_i(t)$:
\begin{eqnarray}
& & \dot{\overline{\bf x}}_i = A \overline{\bf x}_i +B \left( \alpha \sum_{j \neq i}  g_{ij}(\overline{\bf y}_j - \overline{\bf y}_i) +\gamma_i {\bf w}_i \right) \label{eq:main2}\\
& & \overline{\bf y}_i=C \overline{\bf x}_i   \nonumber
\end{eqnarray}
where (without any loss of information) $\overline{\bf y}_j(t)$ for $j \in {\cal N}(i)$ is set equal to the disturbance response ${\bf y}_j(t)$ (for Equation \ref{eq:main}) over the interval $[0,T]$ and is set to zero for $t \ge T$, and the initial conditions are assumed to be zero.  The disturbance response ${\bf y}_i(t)$ is then equal to $\overline{\bf y}_i(t)$ over the interval $[0,T]$,

Transforming (\ref{eq:main2}) to the Laplace domain and solving for $\overline{Y}_i(s)$ yields:
\begin{equation}
    \overline{Y}_i(s)=H_i(s)\sum_{j \in {\cal N}(i)} \frac{g_{ij}}{\sum_{j \in {\cal N}(i)}g_{ij}} \overline{Y}_j(s), \label{eq:maintf}
\end{equation}
where the notation $\overline{Y}_j(s)$ is used for the Laplace transform of $\overline{\bf y}_j(t)$.  Thus, 
$\overline{\bf y}_i(t)$ can be found by filtering
the signal $\overline{\bf z}_i(t)=\sum_{j \in {\cal N}(i)} \frac{g_{ij}}{\sum_{j \in {\cal N}(i)}g_{ij}} \overline{\bf y}_j(t)$ with the transfer function $H_i(s)$.  From this relationship, the energy $E_i(T)$ can be bounded in terms of $E_j(T)$, $j \in {\cal N}(i)$, as follows. First:
\begin{eqnarray}
    E_i & = & \int_{t=0}^T {\bf y}_i^T (t) {\bf y}_i(t) \, dt=\int_{t=0}^T \overline{\bf y}_i^T (t) \overline{\bf y}_i(t) \, dt \\
    & \le & \int_{t=0}^\infty \overline{\bf y}_i^T (t) \overline{\bf y}_i(t) \, dt . \nonumber
\end{eqnarray}
Then, from Parseval's theorem, it follows that:
\begin{equation}
    \int_{t=0}^\infty \overline{\bf y}_i^T (t) \overline{\bf y}_i(t) \, dt =
     \frac{1}{2\pi}\int_{\omega=-\infty}^\infty \overline{Y}_i^T (j\omega) \overline{Y}_i(j\omega) \, d\omega
\end{equation}
From the relationship between $\overline{Y}_i(s)$ and
$\overline{Z}_i(s)$, this can be further characterized as:
\begin{small}
\begin{equation}
    \int_{t=0}^\infty \overline{\bf y}_i^T (t) \overline{\bf y}_i(t) \, dt =
     \frac{1}{2\pi}\int_{\omega=-\infty}^\infty \overline{Z}_i^T (j\omega) H_i^T(j \omega) H_i(j\omega)\overline{Z}_i(j\omega) \, d\omega
\end{equation}
\end{small}
\noindent However, since the maximum singular value of $H_i(j\omega)$ is assumed to be less than or equal to $1$, it is immediate that
\begin{eqnarray}
    \int_{t=0}^\infty \overline{\bf y}_i^T (t) \overline{\bf y}_i(t) \, dt & \le &
     \frac{1}{2\pi}\int_{\omega=-\infty}^\infty \overline{Z}_i^T (j\omega) \overline{Z}_i(j\omega) \, d\omega \\
     & = & \int_{t=0}^{\infty} \overline{\bf z}_i^T \overline{\bf z}_i \, dt, \nonumber
\end{eqnarray}
where the final equality is based on another application of Parseval's theorem.  Then, noting
that the signal ${\bf z}_i(t)$ is non-zero only on $[0,T]$ and exploiting convexity, it follows
that:
\begin{align}
\int_{t=0}^{\infty} \overline{\bf z}_i^T \overline{\bf z}_i \, dt  &= \int_{t=0}^{T} \overline{\bf z}_i^T \overline{\bf z}_i \, dt \\
&\le 
\sum_{j \in {\cal N}(i)} \frac{g_{ij}}{\sum_{j \in {\cal N}(i)} g_{ij}}\int_{t=0}^T \overline{\bf y}_j^T
\overline{\bf y}_j \, dt \\
&\le \max_{j \in {\cal N}(i)} \int_{t=0}^T 
\overline{\bf y}_j^T
\overline{\bf y}_j dt =\max_{j\in {\cal N}(i)} E_j, \label{eq:strictness}
\end{align}
where the last equivalence depends on recognizing that  $\overline{\bf y}_j ={\bf y}_j$ over the interval $[0,T]$,  Combining these
inequalities and equivalences, we get:
\begin{equation}
    E_i(T) \le \max_{j \in {\cal N}(i)} E_j(T). \label{eq:keyproof}
\end{equation}
Here, Equation (\ref{eq:keyproof}) is valid for any
subsystem $i$ other than the source $s$, and for any
interval $T$. Also, the inequality is strict unless all neighbors in $j \in {\cal N}(i)$ have identical $E_j(T)$, since the second inequality in (\ref{eq:strictness}) is strict except in this circumstance.

The second criterion for propagation stability can be proved through iterative application of Equation (\ref{eq:keyproof}). To develop the proof,
consider any cutset ${\cal V}_C$ and vertex $b \in {\cal V}_B$.  Define a set ${\cal R}$ which initially contains only the vertex $b$.  Then, add to ${\cal R}$ all incoming neighbors $i$ of $b$ which satisfy $E_i(T) \ge E_b(T)$; from Equation (\ref{eq:keyproof}), there exists at least one such vertex.  If the new set ${\cal R}$ includes a vertex in ${\cal V}_C$, then the majorization in the theorem statement is proven.  Otherwise, ${\cal R}$ can again be augmented to include additional vertices $i$ which 
satisfy $E_i(T) \ge E_b(T)$; from (\ref{eq:keyproof}) and the strictness argument presented thereafter, at least one such vertex must exist.  This process can be iterated until at least one vertex in ${\cal V}_C$ is included.  Therefore, $E_i(T) \ge E_b(T)$ for some $i \in {\cal V}_C$, or equivalently $E_b(t) \le max_{i \in {\cal V}_C} E_i$.  Since this inequality holds for all source locations $s$, time horizons $T$, separating cutsets ${\cal V}_C$, and vertices $b \in {\cal V}_B$, the theorem is verified.  

\end{proof}

Theorem 1 is tight in a certain sense, for the case that the subsystem model is single-input single-output (SISO).  Specifically, for this case, consider that the condition of the theorem is not met, i.e. $\max_{\omega} \sigma_{max}(H_i(j\omega))= 
\max_{\omega} |H_i(j \omega)|>1$ for some vertex $i$.  Then for some network graphs, the network synchronization model will not be propagation stable.  To see why, consider a network graph for which the vertex $i$ has only a single incoming edge, say from vertex $j$.  It is easy to check in this case that
$Y_i(s)=H_i(s) Y_j(s)$, i.e. the disturbance response at vertex $i$ is the filtration of the disturbance response at vertex $j$ by $H_i(s)$.  Then consider a single tone disturbance $w_s(t)$ (where $s \neq i$) at a frequency $\omega$ such that $|H_i(j\omega)|>1$.   For this disturbance, the response at each subsystem is also asymptotically a single-tone frequency at the same frequency $\omega$. Since $|H_i(j\omega)|>1$, it thus follows that the amplitude of the sinusoidal response at vertex $i$ is larger than that at vertex $j$.  Thus, for a sufficiently long time horizon $T$, $E_i (T) > E_j (T)$.  Therefore, since vertex $j$ is a cutset that separates the source vertex $s$ from vertex $i$, the network model is not propagation stable.  
If the subsystem model is multi-input multi-output (MIMO), then characterization of response amplitudes for a single tone input is more complex, because the response vector at a subsystem may not coincide with the maximum amplification direction.  However, the network certainly may be susceptible to
disturbance amplification if $|H_i(j\omega)|>1$.

Conditions for propagation imperviousness in a region of the network graph can also be developed using a parallel argument to the proof of Theorem 1.  Here is the result:
\begin{theorem}
Consider a network synchronization model.  For this model, consider a set ${\cal V}_{D}$ of the vertices in the network graph (respectively, subsystems in the network model), such that the induced subgraph defined by ${\cal V}_{D}$ is strongly connected.  The subnetwork defined by ${\cal V}_{D}$ is  propagation impervious if the following two conditions hold: 
1) the synchronization manifold is asympotically stable in the sense of Lyapunov and 2) 
$\sup_{\omega} \sigma_{max}(H_i(j\omega)) \le 1 $
for all $i\in {\cal V}_D$. 
\end{theorem}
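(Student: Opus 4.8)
The plan is to follow the same local-to-global strategy used for Theorem 1, but to run the back-walk entirely inside ${\cal V}_D$. First I would dispatch criterion (1): the synchronization manifold is asymptotically stable by hypothesis. For criterion (2), the crucial observation is that the per-vertex bound (\ref{eq:keyproof}), $E_i(T) \le \max_{j \in {\cal N}(i)} E_j(T)$, was derived using only the single-vertex gain condition $\sigma_{max}(H_i(j\omega)) \le 1$ together with the Parseval and convexity estimates; its derivation never invoked the gain bound at any other subsystem. Consequently (\ref{eq:keyproof}) holds verbatim for every $i \in {\cal V}_D$ other than the source, which is exactly the set on which Theorem 2 assumes the gain condition. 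The strictness refinement noted after (\ref{eq:strictness}) likewise carries over unchanged.

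Next I would set up the cutset bookkeeping restricted to ${\cal V}_D$. Fix a source $s$, a disturbance, a horizon $T$, a separating cutset ${\cal V}_C \subseteq {\cal V}_D$, and a target vertex $b \in {\cal V}_B \cap {\cal V}_D$. From the cutset axiom (no edges from ${\cal V}_1$ to ${\cal V}_B$), every incoming neighbor of a vertex in ${\cal V}_B$ lies in ${\cal V}_B \cup {\cal V}_C$; intersecting with ${\cal V}_D$ shows that, inside the induced subgraph on ${\cal V}_D$, the set ${\cal V}_C$ still separates ${\cal V}_D \cap {\cal V}_A$ from ${\cal V}_D \cap {\cal V}_B$. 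I would then mimic the growth procedure of Theorem 1: initialize ${\cal R} = \{b\}$ and repeatedly adjoin, to any vertex of ${\cal R}$ lying in ${\cal V}_D$, an incoming neighbor whose energy is at least $E_b(T)$, one such neighbor being guaranteed by (\ref{eq:keyproof}) and the strictness argument. Since ${\cal V}_D$ is strongly connected and ${\cal V}_C$ separates within it, the goal is to show this monotone-energy back-walk must meet ${\cal V}_C$, whence $\max_{c \in {\cal V}_C} E_c(T) \ge E_b(T)$, establishing the required majorization for every $b \in {\cal V}_B \cap {\cal V}_D$.

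The main obstacle is confining the back-walk to ${\cal V}_D$. Inequality (\ref{eq:keyproof}) only certifies that \emph{some} incoming neighbor carries energy at least $E_b(T)$, and a priori that neighbor could lie in ${\cal V}_B \setminus {\cal V}_D$, where no gain bound is available and the back-walk can no longer be propagated; energy injected into ${\cal V}_D \cap {\cal V}_B$ along such an external edge is precisely what could defeat the majorization. The heart of the argument is therefore to show that the back-walk can always be continued within ${\cal V}_D$, equivalently, that the binding high-energy predecessor of a vertex in ${\cal V}_D \cap {\cal V}_B$ may be taken inside ${\cal V}_D$, and that, progress being monotone on a finite graph, it terminates on ${\cal V}_C$ rather than cycling. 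I expect to lean on the strong-connectivity hypothesis for ${\cal V}_D$ here, together with the separating structure, and this confinement step is where the proof is genuinely more delicate than its Theorem 1 counterpart.
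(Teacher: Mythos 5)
Your proposal reproduces the parts of the Theorem~1 argument that do transfer (the per-vertex bound (\ref{eq:keyproof}) for each $i \in {\cal V}_D\setminus\{s\}$, and the observation that incoming neighbors of ${\cal V}_B$-vertices lie in ${\cal V}_B \cup {\cal V}_C$), but it stops exactly at the decisive step and leaves it unproven. The paper itself omits the proof with the remark that it is ``analogous to that of Theorem~1,'' so there is no explicit argument to compare against; what your attempt makes clear is that the analogy is not automatic, and your final paragraph is a statement of the remaining obligation rather than a discharge of it. Concretely: inequality (\ref{eq:keyproof}) bounds $E_b(T)$ by the maximum of $E_j(T)$ over \emph{all} incoming neighbors $j$ of $b$ in the full graph, and it only certifies that \emph{some} maximizing neighbor exists, not that one can be chosen in ${\cal V}_D$. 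If $b \in {\cal V}_B \cap {\cal V}_D$ has incoming neighbors $c \in {\cal V}_C$ and $v \in {\cal V}_B \setminus {\cal V}_D$, then no gain bound is assumed at $v$, the response at $v$ may be amplified relative to the cutset (its own driving signal still reaches it only through ${\cal V}_C$, but through local transfer matrices of unconstrained gain), and (\ref{eq:keyproof}) then yields only $E_b(T) \le \max(E_c(T), E_v(T))$, which does not imply the required $E_b(T) \le \max_{c \in {\cal V}_C} E_c(T)$. At that point the back-walk cannot be continued, since (\ref{eq:keyproof}) is unavailable at $v$.

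Your stated hope of ``leaning on strong connectivity of ${\cal V}_D$'' does not obviously close this gap: strong connectivity guarantees the existence of paths within ${\cal V}_D$ from ${\cal V}_C$ to $b$, but says nothing about energy being non-increasing along those particular paths, because the energy bound at each vertex is governed by the maximum over all of its predecessors, including external ones. To complete the proof along these lines you would need either an argument that the energy injected into ${\cal V}_D \cap {\cal V}_B$ from ${\cal V}_B \setminus {\cal V}_D$ is itself majorized by the cutset energies (which is precisely what fails without a gain bound outside ${\cal V}_D$), or a sharpened version of (\ref{eq:keyproof}) that isolates the contribution of neighbors inside ${\cal V}_D$. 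As written, the proposal establishes the local inequality and the cutset bookkeeping but does not prove the theorem.
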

Since the proof is analogous to that of Theorem 1, it is omitted. \newline

%The condition on the frequency responses of the local feedback transfer functions for propagation stability and imperviousness can be sometimes be relaxed based on structural properties of the network graph.  One approach for
%obtaining more relaxed conditions is illustrated in Figure \ref{fig:???}.  
%The idea is to partition a set of vertices ${\cal V}_D$ in the network graph into multiple subsets, which each are cutsets of the graph.  !!!Still working on this, I believe there's a way to allow for a gain larger that 1, but it is tricky!!!  !!!As I work through examples, it's clear that I need to think through this!!!  !!!Working more on this, it seems that examples without the strict requirement on H(j\omega) have rather complicated spatial responses -- largely
%will not satisfy the propagation stability criteria.

Since propagation stability depends critically on the gains of the local
transfer matrices $H_i(s)$, it is useful to further interpret these matrices from a system-theoretic standpoint.  Of interest, the transfer matrices can be interpreted as the closed-loop reference-signal-to-output transfer function
when a certain static feedback controller is applied to the subsystem model.  Specifically, the
local transfer matrix $H_i(s)$ is the transfer function from $r(t)$ to $y(t)$ 
of the following closed-loop system model:
\begin{eqnarray}
\dot{\bf x}=A{\bf x}+B {\bf u} \\
{\bf y}=C {\bf x} \nonumber \\
{\bf u}=(\alpha \sum_{j \ne i} g_{ij} ) ({\bf r}- {\bf y}).
\end{eqnarray}
The local transfer matrices are thus seen to capture the closed-loop dynamics of
the subsystem model, when an identical proportional feedback controller 
with gain $k_i=\alpha \sum_{j \ne i} g_{ij}$ applied at each channel.
This feedback control interpretation is useful for characterizing 
propagation stability in terms of only the subsystem model, as we will do 
for the SISO case in the following section.

%!!!Develop the connection with the classical stability and H2/H_=inf story!!!

\subsection{SISO Case: Subsystem-Based Characterization}

The interpretation of the local transfer matrix $H_i(s)$ as a closed-loop model allows easy development of conditions for propagation stability phrased in terms of either the frequency response or the transfer function of the subsystem model, when the subsystem is SISO.  To develop this
analysis, let us denote the transfer function for the subsystem model as $T(s)=C(sI-A)^{-1}B$.  Then, for the SISO case, the local transfer matrix can be written as 
\begin{equation}
    H_i(s)=\frac{k_iT(s)}{1+k_iT(s)},
\end{equation}
where $k_i =\alpha \sum_{j \ne i} g_{ij}$.

We now develop a check for propagation stability in terms of the subsystem frequency response $T(j\omega)$.  To do so, recall from Theorem $1$ that propagation stability requires $|H_i(j\omega)| \le 1$ for all $\omega$ and $i=1,\hdots, N$, which we call the {\em local requirement}, in addition to stability of the synchronization manifold.  Through substitution, the local
requirement can equivalently be written as 
$|\frac{k_i T(j\omega)}{1+k_i T(j\omega)}| \le 1$ for 
all $i=1, \hdots, N$.  Noting $T(j\omega)$ is a complex number for
each $\omega$, say $a(\omega)+jb(\omega)$, the expression for the local requirement 
can be written as $|\frac{k_i (a(\omega)+jb(\omega))}{1+k_i (a(\omega)+jb(\omega))}| \le 1$.  From this expression, it is apparent that the local requirement is met if and only if $|k_i a(\omega)|\le |1+k_i a(\omega) |$ for all $\omega$ and $i=1,\hdots, N$.  However, this inequality holds true if and only 
if $a(\omega)\ge\frac{1}{-2k_i}$ for all $\omega$ and all $i=1,\hdots, N$.  Thus, the local requirement is seen to be met if
and only if the following condition holds:
\begin{equation}
  Re(T(j \omega)) \ge  -\frac{1}{2\alpha \max_i \sum_{j\ne i} g_{ij}}.
\end{equation}
Thus, we see that the local requirement for propagation stability is met if and only if the frequency response $T(j\omega)$ of the local subsystem model lies entirely to the right of
$-\frac{1}{2\alpha \max_i \sum_{j\ne i} g_{ij}}$ in the complex plane.

%This result is formalized next as a necessary condition for propagation stability:

%\begin{theorem}
%Consider a network synchronization model with a SISO subsystem model, which has frequency response $T(j\omega)$.  The network model can be propagation stable only if $Re(T(j \omega)) \ge  -\frac{1}{2\alpha \max_i \sum_{j\ne i} g_{ij}}$ for all $\omega$.
%\end{theorem}

For the SISO subsystem case, the characterization of the local requirement provides a means to verify propagation stability entirely in terms of the frequency response of the subsystem model.  Specifically, both the local requirement and the standard condition for asymptotic stability of the synchronization manifold can be determined from the {\em Nyquist plot} of the subsystem model.  For the local requirement to be met, the Nyquist plot must lie entirely to the right of the vertical line  with intercept
$-\frac{1}{2\alpha \max_i \sum_{j\ne i} g_{ij}}$.  Meanwhile, the characterization of asymptotic stability of the synchronization manifold in terms of the Nyquist plot is well known \cite{chennyquist}. Briefly, 
asymptotic stability of the manifold can be equivalenced with simultaneous Hurwitz stability of matrices  $A+ \lambda_i BC$, $i=2,\hdots, N$, where the $\lambda_i$ ($i=2,\hdots, N$) are the non-zero eigenvalues of the Laplacian matrix $L$.  
For the SISO subsystem case, the matrices $A+\lambda_i BC$ are
the state matrices of the closed-loop system, when a proportional feedback controller with gain $\lambda_i$ is applied to the subsystem.  Thus, it is seen that Hurwitz stability of each matrix $A+\lambda_i BC$ can be determined through the standard Nyquist criterion, i.e. by comparing encirclements of the point $\frac{-1}{\lambda_i}$ by the Nyquist plot with the number of open-loop right-half-plane poles.  Since this stability analysis for the synchronization manifold has been developed extensively in previous work, details are omitted.

The propagation stability analysis for the SISO subsystem case gives insight into the role of the coupling constant $\alpha$.  As the coupling is weakened, the permissible region for the Nyquist plot such that the local requirement holds is widened, including more of the left half of the complex plane.  In fact, by decreasing the coupling, the local requirement can be met for any subsystem model unless the Nyquist plot diverges in the left-half-plane (which corresponds to subsystem models with repeated poles on the $j\omega$ axis).  This ability to meet the local requirement by reducing the coupling is logical, since reduced couplings should attenuate propagation of a disturbance through the network.  However, scaling down the coupling may also influence the stability of the synchronization manifold.  Analyses using the master stability function have shown the intermediate couplings are needed for stability of the synchronization manifold, for many subsystem models (e.g., several chaotic oscillators like the Rossler oscillator).  These characterizations, which can also be verified through the Nyquist analyses presented above, indicate that the local requirement and manifold stability may conflict for some subsystem models: a small gain may be needed for the local requirement to hold, while a larger gain is needed for stability of the synchronization manifold.

The frequency-domain analysis developed above immediately yields conditions for propagation stability in terms of the subsystem transfer function.  First, general characterizations of propagation stability can be obtained in the cases where the
subsystem model is either strictly unstable (has open right-half-plane poles) or strictly stable:

\begin{corollary}
Consider the network synchronization model, and assume that the subsystem model is SISO.
\begin{itemize}
    \item If the subsystem model has an open right-half-plane pole, then the network synchronization model is not propagation stable for any network graph $\Gamma$ and  coupling constant $\alpha$.
    \item If the subsystem model is strictly stable (has all poles strictly in the open left-half-plane), then there exists a positive constant $\overline{\alpha}$ such that the network synchronization model is propagation stable for $\alpha \le \overline{\alpha}$.
\end{itemize}
\end{corollary}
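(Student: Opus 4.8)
The plan is to derive both bullets from the SISO frequency-domain characterization together with the sufficiency result of Theorem~1 and the single-tone (tightness) reasoning that follows it. Recall that here $H_i(s)=k_iT(s)/(1+k_iT(s))$ with $k_i=\alpha\sum_{j\ne i}g_{ij}$, and that the local requirement $|H_i(j\omega)|\le 1$ for all $\omega$ is equivalent to $\mathrm{Re}\,T(j\omega)\ge -1/\bigl(2\alpha\max_i\sum_{j\ne i}g_{ij}\bigr)$.

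For the strictly stable case I would argue constructively. First, since every pole of $T$ lies in the open left half plane, $T(j\omega)$ is continuous and bounded on the whole axis and tends to $T(\infty)=0$, so $r_{\min}:=\inf_\omega\mathrm{Re}\,T(j\omega)$ is finite; if $r_{\min}\ge 0$ the local requirement holds for every $\alpha>0$, and otherwise it holds once $\alpha\le\bar\alpha_1:=1/\bigl(2|r_{\min}|\max_i\sum_{j\ne i}g_{ij}\bigr)$, because the vertical intercept then sits at or below $r_{\min}$. Second, I would secure manifold stability for small coupling: the transverse modes are governed by the finitely many matrices $A-\alpha\lambda_m BC$ (equivalently the $A+\lambda_m BC$ of the excerpt), each of which converges to the Hurwitz matrix $A$ as $\alpha\to 0^+$, so by continuity of the spectrum some $\bar\alpha_2>0$ keeps them all Hurwitz. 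Taking $\bar\alpha=\min(\bar\alpha_1,\bar\alpha_2)$, both hypotheses of Theorem~1 hold for $\alpha\le\bar\alpha$ and propagation stability follows.

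For the unstable case the crux is to show that an open right-half-plane pole forces the local gain above unity no matter what $\alpha$ and graph are chosen. I would use the maximum modulus principle: at an open RHP pole $p$ of $T$ one has $T(s)\to\infty$, hence $H_i(p)=1$ (the cancellation case, where $p$ is also a zero of $T$, signals a loss of stabilizability/detectability and is treated separately since the manifold then cannot be asymptotically stable). Whenever the local loop $A-k_iBC$ is Hurwitz, $H_i$ is analytic and non-constant on the closed right half plane with $H_i(\infty)=0$ and $|H_i(p)|=1$ at an interior point, so $\sup_\omega|H_i(j\omega)|>1$: the local requirement is necessarily violated at vertex $i$. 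I would then invoke the single-tone reasoning of the tightness paragraph, exciting a sinusoid at a frequency where $|H_i(j\omega)|>1$ and exhibiting a separating cutset across which the response energy grows, to conclude that the network is not propagation stable.

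The main obstacle is the universality over all graphs in the first bullet. The tightness construction produces a failing configuration only when the amplifying vertex has a single incoming edge, so its input equals one neighbor's output; for larger in-degree the input to $i$ is a convex combination of neighbor outputs, and $|H_i|>1$ bounds $|Y_i|$ only against that average rather than against $\max_c|Y_c|$. The work therefore lies in showing, for an arbitrary graph, source, and worst amplifying frequency, that one can still produce a separating cutset and disturbance with $E_b(T)>\max_{c\in\mathcal{V}_C}E_c(T)$ — for instance by taking $\mathcal{V}_C$ to be the in-neighborhood of $i$ and arranging, through the source placement, the excitation, and the graph structure, that the neighbor responses do not destructively cancel. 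Ruling out the degenerate excitation of the unbounded common (synchronous) mode, and the RHP pole-zero cancellation, are the remaining technical points; by contrast the strictly stable bullet requires none of this and is essentially immediate from Theorem~1.
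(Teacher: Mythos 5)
Your second bullet is handled correctly and is essentially what the paper intends: for strictly stable $T$ the frequency response is bounded, so taking $\alpha\le\bar\alpha_1=1/\bigl(2|r_{\min}|\max_i\sum_{j\ne i}g_{ij}\bigr)$ pushes the critical vertical line to the left of $\inf_\omega \mathrm{Re}\,T(j\omega)$, while continuity of the spectra of the finitely many transverse closed-loop matrices $A+\lambda_m BC$ (which tend to the Hurwitz matrix $A$ as $\alpha\to 0^+$) secures manifold stability; Theorem~1 then gives propagation stability. For the first bullet, your maximum-modulus argument --- $H_i(p)=1$ at an interior point of the right half plane, $H_i(\infty)=0$, hence $\sup_\omega|H_i(j\omega)|>1$ whenever $H_i$ is stable --- is a genuinely different and rather cleaner way to show the local requirement must fail than the route the paper gestures at, which runs through the Nyquist criterion: with $P\ge1$ open RHP poles, manifold stability forces the Nyquist plot of $T$ to encircle each critical point $-1/(\alpha\lambda_m)$, and the Gershgorin bound $|\lambda_m|^2\le 2\,l_{ii}\,\mathrm{Re}\,\lambda_m$ places every such point on or to the left of the line $\mathrm{Re}\,z=-1/(2\alpha\max_i\sum_{j\ne i}g_{ij})$, so the required encirclements are impossible if the plot respects the local requirement. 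Note your version needs a separate case when the local loop $A-k_iBC$ is not Hurwitz (then $H_i$ is not analytic on the closed RHP and the boundary bound says nothing), whereas the paper's version conditions only on manifold stability, which Definition~1 already demands.

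The genuine gap is the one you flag yourself: passing from ``the local requirement is violated at some vertex $i$'' to ``the network is not propagation stable'' for an \emph{arbitrary} graph. The paper's tightness discussion proves this implication only when $i$ has a single incoming edge, so that $Y_i=H_iY_j$ and a single tone at an amplifying frequency yields $E_i(T)>E_j(T)$ across the cutset $\{j\}$; for larger in-degree, $Y_i=H_i\sum_j w_jY_j$ with convex weights $w_j$, and $|H_i(j\omega)|>1$ does not preclude $\bigl|\sum_j w_jY_j\bigr|$ being far smaller than $\max_j|Y_j|$, exactly as you observe. You cannot ``arrange the graph structure,'' since the claim quantifies over all graphs, so your proposal does not actually prove the first bullet as stated. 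You should know, however, that the paper does not close this gap either: its entire proof of the corollary is the assertion that the result ``follows immediately from the Nyquist-plot-based characterization \dots hence details are omitted,'' and the only necessity-type argument it supplies is the single-incoming-edge tightness remark. So the unresolved step is a defect in the corollary's justification in the paper at least as much as in your attempt; what you have cleanly established is that an open RHP pole makes manifold stability and the local requirement mutually exclusive, i.e.\ that the sufficient condition of Theorem~1 can never be met.
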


The result follows immediately from the Nyquist-plot-based characterization of propagation stability (both internal stability and the local requirement), hence details are omitted.

Keener characterizations of propagation stability can be developed by considering what transfer functions meet the local requirement for propagation stability.  First, we note that the local requirement is met regardless of the coupling strength and the network graph, if the Nyquist plot lies in the closed right half of the complex plane.  The restriction of the Nyquist plot in the closed right half plane is a standard notion in systems theory -- positive realness -- which holds true for a system if and only if that system is passive.  Thus, we can easily obtain a general characterization for propagation stability when the subsystem model is passive:

\begin{corollary}
Consider the network synchronization model, and assume that the subsystem model is SISO and passive.  Also, assume that the synchronization manifold is asymptotically stable.  Then 
the network synchronization model is necessarily propagation stable.
\end{corollary}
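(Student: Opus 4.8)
The plan is to reduce the corollary directly to the two conditions of Theorem~1. The asymptotic stability of the synchronization manifold is assumed outright in the corollary, so the first condition of Theorem~1 holds immediately. It therefore suffices to establish the local requirement, namely that $\sup_\omega |H_i(j\omega)| \le 1$ for every subsystem $i$, after which propagation stability follows from Theorem~1 without further work.

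To verify the local requirement, I would appeal to the frequency-domain characterization already derived for the SISO case. There it was shown that $|H_i(j\omega)| \le 1$ for all $\omega$ and all $i=1,\hdots,N$ holds if and only if $Re(T(j\omega)) \ge -\frac{1}{2\alpha \max_i \sum_{j\ne i} g_{ij}}$. The key structural observation is that the threshold on the right-hand side is strictly negative: with $\alpha > 0$ and all coupling weights $g_{ij}\ge 0$ (at least one positive for a connected network), the denominator $2\alpha \max_i \sum_{j\ne i} g_{ij}$ is positive, so the entire expression is negative.

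Next I would translate the passivity hypothesis into a sign condition on the subsystem frequency response. Since the subsystem is SISO and passive, its transfer function $T(s)$ is positive real, and in particular $Re(T(j\omega)) \ge 0$ for all real $\omega$ at which $j\omega$ is not a pole. Because $0$ already exceeds the strictly negative threshold identified above, the characterizing inequality is satisfied trivially for every $\omega$, regardless of the coupling constant $\alpha$ and the network graph. Hence the local requirement holds for all $i$, and both conditions of Theorem~1 are in force, yielding propagation stability.

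I expect no genuine obstacle here: the argument is essentially immediate once the earlier SISO characterization is combined with the definition of passivity. The only step warranting explicit care is the equivalence between passivity and positive realness of $T(s)$ — equivalently, the nonnegativity of $Re(T(j\omega))$ — which is the standard positive-real characterization of a passive SISO system and is already flagged in the discussion preceding the corollary. Beyond invoking that equivalence, the proof amounts to noting that a nonnegative real part automatically clears a negative threshold, so that the coupling-strength and graph dependence drop out entirely.
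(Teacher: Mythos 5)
Your proposal is correct and follows essentially the same route as the paper: both reduce the corollary to the two conditions of Theorem~1, invoke the SISO characterization that the local requirement is equivalent to $Re(T(j\omega)) \ge -\frac{1}{2\alpha \max_i \sum_{j\ne i} g_{ij}}$, and observe that passivity (positive realness) gives $Re(T(j\omega)) \ge 0$, which automatically clears this strictly negative threshold. No meaningful differences.
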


In the case where the subsystem model is not passive, propagation stability depends on the  network graph and the coupling strength, as well as the extent to which the subsystem's Nyquist plot encroaches on the left half of the complex plane. Specifically, the subsystem's frequency response is allowed to have a real part which is lower bounded by 
$-\frac{1}{2\alpha \max_i \sum_{j\ne i} g_{ij}}$ rather than by zero: thus, the graph vertex with largest total incoming edge weight along with the coupling strength decide the region where the Nyquist plot may lie.  The requirement on the subsystem model in this case may be viewed as an "almost passivity" requirement, i.e. a requirement that the system could be made passive using a direct feed-through term subject to a bound.  The requirement is formalized in the following theorem, which summarizes the frequency-response-based characterization of propagation stability:

\begin{theorem}
Consider the network synchronization model, and assume that the subsystem model is SISO.  Also, assume that the synchronization manifold is asymptotically stable.  If $Re(T(j\omega)) \ge -\frac{1}{2\alpha \max_i \sum_{j\ne i} g_{ij}}$ for all $\omega$, then
the network synchronization model is propagation stable.
\end{theorem}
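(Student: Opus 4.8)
The plan is to deduce Theorem 3 directly from Theorem 1 by verifying that the stated real-part condition on $T(j\omega)$ is precisely the local requirement $\sup_\omega \sigma_{max}(H_i(j\omega)) \le 1$ for every subsystem $i$. Since asymptotic stability of the synchronization manifold is assumed outright, the first hypothesis of Theorem 1 is already in hand, and the entire task reduces to establishing the second hypothesis, namely the sub-unity frequency-response bound on the local feedback transfer matrices.

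First I would specialize $H_i(s)$ to the SISO setting, where it becomes the scalar closed-loop transfer function $H_i(s) = \frac{k_i T(s)}{1 + k_i T(s)}$ with loop gain $k_i = \alpha \sum_{j \ne i} g_{ij}$. For a scalar transfer function the maximum singular value coincides with the modulus, so the local requirement is exactly $|H_i(j\omega)| \le 1$ for all $\omega$ and all $i = 1, \hdots, N$.

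Next I would convert this modulus bound into a half-plane condition on $T(j\omega)$. Writing $T(j\omega) = a(\omega) + j b(\omega)$, the inequality $|H_i(j\omega)| \le 1$ is equivalent to $|k_i T(j\omega)| \le |1 + k_i T(j\omega)|$; squaring both sides and cancelling the common $k_i^2(a^2 + b^2)$ terms collapses this to $0 \le 1 + 2 k_i a(\omega)$, i.e. $a(\omega) \ge -\frac{1}{2 k_i}$. This is the routine algebraic reduction already carried out in the preceding subsection, so I would simply invoke it rather than repeat the calculation.

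Finally I would enforce the bound simultaneously over all subsystems. Because $-\frac{1}{2 k_i}$ is increasing in $k_i$, the most stringent constraint corresponds to the largest loop gain $\max_i k_i = \alpha \max_i \sum_{j \ne i} g_{ij}$; hence the hypothesis $Re(T(j\omega)) \ge -\frac{1}{2\alpha \max_i \sum_{j \ne i} g_{ij}}$ for all $\omega$ guarantees $|H_i(j\omega)| \le 1$ for every $i$. With both hypotheses of Theorem 1 thereby verified, propagation stability follows. I do not anticipate a genuine obstacle here, since the content is a specialization of Theorem 1 combined with the frequency-domain derivation already in hand; the only point requiring care is the quantification over $i$, specifically the observation that the worst case is set by the vertex of maximum total incoming edge weight, which is exactly why the stated bound features $\max_i \sum_{j \ne i} g_{ij}$.
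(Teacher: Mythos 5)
Your proposal is correct and mirrors the paper's own argument: the paper establishes Theorem 3 by exactly this route, reducing the SISO local requirement $|H_i(j\omega)|=\bigl|\tfrac{k_iT(j\omega)}{1+k_iT(j\omega)}\bigr|\le 1$ to the half-plane condition $Re(T(j\omega))\ge -\tfrac{1}{2k_i}$, taking the worst case over $i$ via the maximum weighted in-degree, and then invoking Theorem 1. No gaps.
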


%Equivalently, this condition can be written as follows:
%\begin{equation}
%    H_i(j\omega)=Ce^{j\phi} \label{eq:goo}
%\end{equation}
%for some $C\in[0,1]$ and $\phi \in [0,2\pi]$, for each $\omega$.  Substituting 
%$H_i(j\omega)=\frac{k_i T(j\omega)}{1+k_iT(j\omega)}$ and rearranging yields
%the following requirement for propagation stability:
%\begin{equation}
%    k_i T(j\omega)=\frac{Ce^{j\phi}}{1-Ce^{j\phi}}, \label{eq:moo}
%\end{equation}
%for some $C\in [0,1]$ and $\phi \in [0,2\pi]$, for each $\omega$.  
%The right side of Equation (\ref{eq:moo}) traces a region in the complex plane, as $C$  and $\phi$ are varied as indicated.  Propagation stability
%requires that the curve $k_i T_(j\omega)$ lies entirely within this region
%for each $k_i$.  To get intuition into the shape of this region, it is instructive to plot $\frac{Ce^{j\phi}}{1-Ce^{j\phi}}$  in the complex plane as a function of $\phi$, for several values of $C$ \ref{fig:region}.  From the plot and a simple algebraic analysis, it follows that the region defined by $\frac{Ce^{j\phi}}{1-Ce^{j\phi}}$, $C \in [0,1]$,
%$\phi \in [0,2\pi]$, includes all points to the right of %$\frac{-1}{2}$ in the complex plane.  Thus, the requirement for %propagation stability is that:
%\begin{equation}
%    Re(k_i T(j \omega)) \ge \frac{-1}{2}
%\end{equation}
%for all $\omega$ and each $k_i$. Equivalently, the requirement for propagation stability is that the {\em Nyquist plot} of $k_i T(j \omega)$ lies entirely to the right of $\frac{-1}{2}$ in the complex plane.   

\section{Example}

The propagation stability analysis is illustrated for an example model with planar subsystems. 
Specifically, let us consider a synchronization network process whose subsystem model $(C,A,B)$ is defined as follows: $A=\begin{bmatrix} 0 & 1 \\ 0 & d \end{bmatrix}$, $B=\begin{bmatrix} 0 \\ 1 \end{bmatrix}$, and $C=\begin{bmatrix} 1 & 0 \end{bmatrix}$, where we refer to the positive scalar $d$ as a damping constant.  Several common network models are well-approximated by this form, including the classical model for the bulk power grid's swing dynamics, mass-spring-damper network models, and models for vehicle teams engaged in formation flight.  These networks are known to exhibit a dichotomy of responses to sinusoidal or periodic disturbances, depending on the damping constant $d$.  When the system is sufficiently damped, periodic/sinusoidal disturbances cause only localized responses; on the other hand, if the damping is low, disturbances at certain frequencies incur network-wide responses.

The propagation stability concept provides a means for assessing how the damping influences the disturbance response pattern for the defined class of synhronization network processes. In particular, for this example, the frequency-domain criterion for propagation stability can readily be phrased in terms of the damping constant $d$. If the criterion is met,  then disturbance responses are necessarily localized.  If not, the network model potentially may be susceptible to network-wide responses for disturbances over certain frequency ranges,

The criterion for propagation stability includes a standard requirement for stability of the synchronization manifold, and an additional local requirement.  For the network model considered in this example, stability of the synchronization manifold has been precisely characterized in prior work.  Provided that the eigenvalues of the Laplacian are real (which encompasses the symmetric and diagonally symmetrizable cases), stability of the synchronization manifold holds for any damping.  If the Laplacian has complex eigenvalues, then a sufficiently large damping is needed for stability of the synchronization manifold.  

In our development here, we assume that the criterion for stability of the synchronization manifold is met, and focus on relating the local requirement with the damping ratio.  The local requirement is met if the subsystem model $Re(T(j\omega)) \ge -\frac{1}{2\alpha \max_i \sum_{j\ne i} g_{ij}}$ for all $\omega$. The transfer function for the planar subsystem model is $T(s)=\frac{1}{s^2+ds}$, and hence the frequency response
is $T(j\omega)=\frac{1}{-\omega^2+j d\omega}=\frac{-\omega^2-jd \omega}{\omega^4+d^2 \omega^2}$.  It immediately follows that
$Re(T(j\omega))=\frac{-1}{\omega^2+d^2}$.  Thus, we find that
$Re(T(j\omega))$ is bounded by the interval $[-\frac{1}{d^2},0]$,
with the lower bound achieved asymptotically as $\omega \rightarrow 0$.   The local requirement is therefore met if 
$-\frac{1}{d^2} \ge -\frac{1}{2\alpha \max_i \sum_{j\ne i} g_{ij}}$, or equivalently $d \ge \sqrt{2\alpha \max_i \sum_{j\ne i} g_{ij}}$.  Thus, a sufficiently large damping relative to the largest (weighted) out-degree in the network graph guarantees propagation stability.  If the damping ratio is not sufficiently large, the network is potentially susceptible to wide-area responses, for low-frequency disturbances.  

Thus, we have shown that an additional requirement of sufficient damping is needed to ensure propagation stability in addition to stability of the synchronization manifold.  In the case where the criterion is not met, disturbances with certain frequency components have the potential for amplification across the network.  On the other hand, when the damping requirement is met, disturbances are restricted to have local spheres of influence.  We note that the condition for propagation stability is phrased entirely in terms of the subsystem model and local graph properties, which then allows the development of a condition on the damping for propagation stability.

\end{document}